\newtheorem{lemma}{Lemma} 
\begin{document}

\title{Enhancing Channel Estimation for OTFS systems using Sparse Bayesian Learning with Adaptive Threshold}

\author{\IEEEauthorblockN{Tengfei Qi,
Yifei Yang, Xiong Deng\textsuperscript{*}, Zhinan Sun, Ziqiang Gao, Xihua Zou, Wei Pan, and Lianshan Yan\\}

 \textit{ School of Information Science and Technology, Southwest Jiaotong University}, Chengdu, China\\
 \textit{ National Mobile Communications Research Laboratory, Southeast University}, Nanjing, China\\
 *Corresponding author: xiongdeng@swjtu.edu.cn
}



\maketitle

\begin{abstract}
Orthogonal time frequency space (OTFS) modulation is a two-dimensional modulation scheme designed in the delay-Doppler (DD) domain, exhibiting superior performance over orthogonal frequency division multiplexing (OFDM) modulation in environments with high Doppler frequency shifts. We investigated the channel estimation in the DD domain of OTFS systems, modeling it as a sparse signal recovery problem. Subsequently, within the existing sparse Bayesian learning framework, we proposed an adaptive Bayesian threshold-based active denoising mechanism. Combined with inverse-free sparse Bayesian learning, this effectively addresses the pseudo-peak issue in low signal-to-noise ratio (SNR) scenarios while maintaining low complexity. The simulation results demonstrate that this algorithm outperforms existing channel estimation algorithms in terms of anti-noise performance and complexity.
\end{abstract}

\begin{IEEEkeywords}
OTFS, inverse-free sparse Bayesian learning (IFSBL), Bayesian threshold, channel estimation  
\end{IEEEkeywords}

\section{Introduction}
The sixth-generation wireless communication system are expected to support ubiquitous connectivity to a wide range of mobile devices, from self-driving cars to high-speed trains. One of the key challenges for these services is to provide reliable communications in a highly mobile environment\cite{Wei2021WC}. Conventional orthogonal frequency division multiplexing (OFDM) modulation is a key technology in fifth-generation (5G) cellular systems, which achieves high spectral efficiency for time-invariant frequency-selective channels. While OFDM systems are sensitive in high mobility scenarios and affected by the Doppler effect in time-varying channels. Recently, a new two-dimensional modulation scheme, i.e., orthogonal time frequency space (OTFS), has been proposed as a promising candidate for highly mobile communications \cite{Hadani2018}. OTFS modulates information in the delay-Doppler (DD) domain, rather than in the time-frequency (TF) domain of classical OFDM modulation, with sufficient diversity gain and significant anti-fading performance \cite{diversity2019TWC}.

The authors in \cite{Embedded2019TVT} employed a threshold-based method for channel estimation, which effectively reduces noise in the channel estimation but is only applicable in high signal-to-noise ratio (SNR) scenarios. In \cite{Murali2018ITA}, a channel estimation method based on pseudo-noise (PN) sequences was proposed, which estimated the fractional Doppler frequency shift of each channel through matching filtering. This method does not require high SNR, but has a large amount of computation, and has high requirements for the two-dimensional correlation of the sequence.The study in \cite{Embedded2019TVT} used the sparsity of channels in the DD domain to express the OTFS channel estimation as a sparse signal recovery problem, and the orthogonal matching pursuit (OMP) algorithm was applied to the channel estimation of OTFS. Sparse Bayesian learning (SBL) was used to solve the OTFS channel estimation problem in \cite{ZhaoLei2020cletter}, but the high complexity will hinder its practical application.

Recently, it has been found that by further relaxing the evidence lower bound, SBL requiring no inverse operation can be constructed\cite{IFSBL2017LSP}. This breakthrough significantly reduces computational complexity, achieving efficient solutions to sparse problems. However, this relaxation inevitably affects the sparsity and stability of the solution. Simulation results show that this relaxation strategy leads to unnecessary pseudo peaks at nonzero element positions, which not only reduces the accuracy of signal convergence but also may increase the bit error rate (BER). Especially in dealing with fractional Doppler effects, where channel responses typically exhibit block-sparse signals, the pseudo-peak effect caused by this algorithm is more pronounced. 

To address the issue of pseudo peaks generated in SBL algorithms, we propose an active denoising mechanism for OTFS system based on Bayesian threshold, which is cleverly embedded into the framework of SBL and continuously optimized through iteration. The core of this mechanism lies in its adaptive screening ability, which accurately identifies the positions of nonsparse values and actively adjusts the variance of these values. By actively adjusting the variance, our mechanism prevents nonsparse values from erroneously converging into pseudo peaks, thus ensuring the sparsity of the solution. 
Simulation analysis confirms that at low SNR, the proposed algorithm achieves a 0.6 dB improvement in channel estimation performance  with reduced complexity compared to the SBL algorithm. Compared to the inverse-free sparse Bayesian learning (IFSBL) algorithm, it maintains the same complexity and achieves a 2.6 dB improvement in channel estimation performance.


\section{OTFS System Model}
\subsection{OTFS Input and Output Model}
We consider a point-to-point single-antenna OTFS system. The information bit data to be sent is first modulated to form an information symbol, and \(x[k,l]\) is defined as the information symbol in the delay-Doppler domain, where \(M\) and \(N\) represent the number of subcarriers and the number of symbols, respectively. 
The signal \( x[k, l] \) is transformed into the time-frequency domain via inverse symplectic finite Fourier
transform (ISFFT) to obtain \( X[n, m] \). Subsequently, it is converted into the time domain using the Heisenberg transform to yield the time-domain signal \( s(t) \). This time-domain signal \( s(t) \) passes through a wireless channel characterized by a complex baseband channel impulse response \( h(\tau, \nu) \), where \( \tau \) and \( \nu \) represent the delay and Doppler shift of the time-varying channel, respectively. At the receiver, the received time-domain signal \( r(t) \) is first transformed into the time-frequency domain, and then discretely sampled to obtain \( Y[n, m] \).
Then, the SFFT transformation of \(Y[n,m]\) is performed to obtain the delay-Doppler domain symbol \(y[k,l]\), which is used for channel estimation and symbol detection\cite{OTFS2017WCNC}.

In a typical wireless environment, the maximum Doppler and delay shifts are limited. We denote the maximum Doppler and delay as \(\nu_{\max }\) and \(\tau_{\max }\), respectively, and the range of the DD domain channel \(h(\tau,\nu)\) is limited to the range of \([0,\tau_{\max }]\) and \([-\nu_{\max },\nu_{\max }]\). Depending on the sparsity of the DD domain channel, the channel can be modeled as
\begin{equation}
h\left(\nu,\tau\right)=\sum_{i=1}^{P}h_{i}\delta\left(\nu-\nu_{i}\right)\delta\left(\tau-\tau_{i}\right)
\end{equation}
where \(P\) is the number of propagation paths, \(h_i\), \(\tau_i\), and \(\nu_i\) denote the complex gain, delay and Doppler shifts of the \(i\)-th path, respectively, and \(\delta(\cdot)\) is the Dirac delta function. We represent the delay and Doppler taps for path \(i\)-th as follows
\begin{equation}
\tau_i=\frac{l_{\tau_i}}{M\Delta f},\nu_i=\frac{k_{\nu_i}+\kappa_{\nu_i}}{NT}
\end{equation}
where \(\Delta f\) and \(T\) are the sub-carrier spacing and symbol period, respectively. \(l_{\tau_i}\) and \(k_{\nu_i}\) are integers, and \(\kappa_{\nu_i}\) belongs to \((-0.5, 0.5)\). Specifically, \(l_{\tau_i}\) and \(k_{\nu_i}\) denote the indices of the delay taps and the Doppler taps, corresponding to the delay \(\tau_i\) and the Doppler frequency \(v_i\), respectively. Additionally, \(l_{\max }= \lceil \tau_{\max }M\Delta f \rceil \) and \(k_{\max } = \lceil \nu_{\max }NT \rceil\) are defined as the maximum delay and the maximum Doppler shift corresponding to the delay and Doppler taps. We refer to \(\kappa_{\nu_i}\) as fractional Doppler because it represents the fractional offset from the nearest Doppler tap \(k_{\nu_i}\). Fractional delay can be ignored because the resolution of the timeline is sufficient to approximate the path delay to the nearest sample point in a typical broadband system \cite{ZhaoLei2020cletter}. According to \cite{Interference2018TWC}, when both the transmit pulse \(g_{tx}\) and the receive pulse \(g_{rx}\) are rectangular pulse functions, the receive signal \(y[k,l]\) can be expressed as
\begin{align}
y[k,l]\approx&\sum_{i=1}^{P}\sum_{q=-\eta}^{\eta}(h_{i}e^{j2\pi\left(\frac{l-l_{\tau_{i}}}{M}\right)\left(\frac{k_{\nu_{i}}+\kappa_{\nu_{i}}}{N}\right)}\varphi_{i}(k,l,q) \notag
\\
&\cdot x\left[[k-k_{\nu_{i}}+q]_{N},[l-l_{\tau_{i}}]_{M}\right])+w[k,l]
\label{IOmodel}
\end{align}
where
\begin{align}
\varphi_i(k,l,q)=\begin{cases}\Psi(q) ~e^{j2\pi\frac{(l-l_{\tau_i})(k_{\nu_i}+\kappa_{\nu_i})}{MN}}\\~~~~~~l_{\tau_i}\leq l<M\\\left({\Psi(q)}-\frac{1}{N}\right)e^{j2\pi\frac{(l-l_{\tau_i})(k_{\nu_i}+\kappa_{\nu_i})}{MN}} e^{-j2\pi\frac{[k-k\nu_i+q]_N}{N}}\\~~~~~~0\leq l<l_{\tau_i}\end{cases}
\end{align}
and \(\Psi(q)=\frac{e^{j2\pi(-q-\kappa_{\nu_i})}-1}{Ne^{j\frac{2\pi}{N}(-q-\kappa_{\nu_i})}-N}\), \([\cdot]_M\) and \([\cdot]_N\)  is the modulo operation of \(M\) and \(N\), respectively. The role of \(\eta\) is to approximate the effect of fractional-order Doppler on the channel, and according to \cite{Embedded2019TVT}, the approximation works best when \(\eta\) = 5.

\subsection{OTFS Channel Estimation Model}
The arrangement of the pilot symbols, the protection symbols and the data symbols is shown in Fig. \ref{fig_0}. Define \(\mathcal{M}_p\) and  \(\mathcal{N}_p\) as the pilot index sets in the Doppler and delay domains, respectively. Pilot symbols of size \(|\mathcal{M}_p| \times |\mathcal{N}_p|\) are placed at the center of the DD domain, where \(|\cdot|\) returns the size of the input value. Following \cite{offgrid2022Wei}, placing null as the guard interval on the DD domain grids in the range: \(\min{\{\mathcal{N}_p}\} - 2{k_{\max }} - {\eta} \le k \le \max{\{\mathcal{N}_p}\} + 2{k_{\max }} + {\eta}\) and \(\min{\{\mathcal{M}_p}\} - {l_{\max }} \le l \le \max{\{\mathcal{M}_p}\} + {l_{\max }}\), with \(k \notin \mathcal{N}_p\) and \(l \notin \mathcal{M}_p\). The rest of the region can be used to place data symbols. 
As can be seen from Fig. \ref{fig_0}, the number of receiver symbols used for delay-Doppler channel estimation is \(Q = (2{k_{\max }} + |\mathcal{N}_p|)\times({l_{\max }} + |\mathcal{M}_p|)\), and we define \(R = (2{k_{\max }} + |\mathcal{N}_p| + 2{\eta})\times({l_{\max }} + |\mathcal{M}_p|)\) . To facilitate channel estimation, the system model in (1) can be rewritten as \cite{ZhaoLei2020cletter}
\begin{equation}
\mathbf y=\mathbf \Phi \mathbf h+ \mathbf w
\label{y_phih_w}
\end{equation}
where \(\mathbf y\) \(\in \) \(\mathbb{C}^{Q\times1}\),  \(\mathbf \Phi\) \(\in \) \(\mathbb{C}^{Q\times R}\), \(\mathbf h\) \(\in \) \(\mathbb{C}^{R\times1}\)  and \(\mathbf w\) \(\in \) \(\mathbb{C}^{Q\times1}\) are the receive signal, measurement matrix, channel and Gaussian noise, respectively.

\begin{figure}[!t]
\centering
\includegraphics[width=3in]{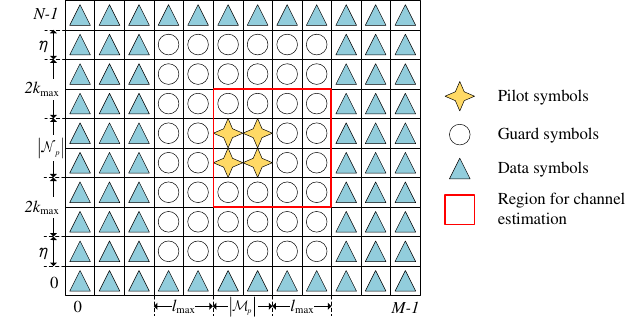}
\centering
\captionsetup{font={footnotesize }}
\caption{The arrangement of pilot and data symbols in the DD domain.}
\label{fig_0}
\end{figure}

\section{Channel Estimation Based On Embedded Bayesian Threshold}
In this section, we first review the basic process of IFSBL. The relaxation strategy used to avoid inverse operations will produce pseudo peaks at nonzero positions, affecting the accuracy of the final channel estimation. To address this issue, we propose an active denoising mechanism based on Bayesian thresholds that continuously adjusts and optimizes with each iteration.
\subsection{Inverse-free SBL for OTFS Channel Estimation}
The core idea of sparse Bayesian learning lies in leveraging prior information to impose sparsity constraints on model parameters and solving the posterior distribution of model parameters through Bayesian inference methods. In OTFS channel estimation, a two-layer hierarchical prior is assigned to \(\mathbf{h}\). As shown in \cite{IFSBL2017LSP}, \(\mathbf{h}\) follows a Gaussian prior distribution characterized by \(\boldsymbol{\alpha}\) in the first layer. The second layer imposes a Gamma hyper-prior on \(\boldsymbol{\alpha}\), parameterized by shape \(a\) and inverse scale \(b\). Similarly, \(\mathbf{w}\) is assumed to be Gaussian distribution with zero mean and covariance \((1/\gamma)\mathbf{I}\), while a Gamma hyper-prior is set for \(\gamma\) with shape \(c\) and scale \(d\). The Gamma distribution that \(\alpha_n\) and \(\gamma\) follow can be represented by \eqref{eq:alpha-gamma}. Let \(\boldsymbol{\theta} = \{\mathbf{h}, \boldsymbol{\alpha}, \gamma\}\) represent the latent variables for channel gain, variance of channel gain, and noise variance. Using mean-field theory, the approximated posterior distribution is expressed as \(q(\boldsymbol{\theta}) = q_h(\mathbf{h})q_\alpha(\boldsymbol{\alpha})q_\gamma(\gamma)\), as in \cite{tzikas2008MSP}.

\begin{equation}
p(\boldsymbol{\alpha})=\prod_{n=1}^Q\Gamma^{-1}(a)b^a\alpha_n^{a-1}e^{-b\alpha_n},
 ~~
 p(\gamma)=\Gamma(c)^{-1}d^c\gamma^{c-1}e^{-d\gamma} \label{eq:alpha-gamma} 
\end{equation}

In variational inference of SBL, the Kullback-Leibler (KL) divergence is commonly used to measure the similarity between the approximated posterior distribution \( q(\boldsymbol\theta) \) and the true posterior distributions \( p(\boldsymbol\theta|\mathbf y) \) with the following relationship
\begin{equation}
\text{ELBO}(q) = \ln p(\mathbf y)-\text{KL}(q || p) 
\end{equation} 
where \(\text{KL}(q || p)\) is the KL divergence between \( q(\boldsymbol\theta) \) and \( p(\boldsymbol\theta|\mathbf y) \).
Thus, minimizing the KL divergence is equivalent to maximizing the evidence lower bound (ELBO) by alternating optimization on  \( \mathbf{h},\boldsymbol\alpha\) and \(\gamma\) in \cite{IFSBL2017LSP}. Specifically, \(q_h(\mathbf{h}) \) is updated based on a Gaussian distribution, with its mean and covariance matrix defined by 
\begin{align}
    \boldsymbol{\mu}  =& ~\gamma \mathbf{\Sigma} {\mathbf \Phi^\text{T}}\mathbf{y},\\
    \mathbf{\Sigma}  =& {\left( {\gamma {\mathbf \Phi^\text{T}}\mathbf \Phi + \mathbf{\Lambda}} \right)^{ - 1}}
\end{align}
where \(\mathbf{\Lambda}\) is a diagonal matrix, and the \(n\)th diagonal element is equal to \(\left\langle {{\alpha _n}} \right\rangle \). Notice that updating the posterior distribution \(q_h(\mathbf{h})\) requires computing the inverse of matrix, suggesting maximizing a relaxed ELBO to tackle this computational challenge by Lipschitz constraints in \cite{IFSBL2017LSP}. Maximizing the ELBO requires \(V = 2{\lambda _{\max }}({\mathbf \Phi^\text{T}}\mathbf \Phi)\), where \({\lambda _{\max }}(\cdot)\) denotes finding the maximum eigenvalue of the input value. Then, we can obtain a relaxed lower bound on \(p(\mathbf{y}|\mathbf{h},\gamma )\) as
\begin{align}
        p(\mathbf{y}|\mathbf{h},\gamma ) \ge \frac{{\gamma ^{Q/2}}}{{\sqrt {2\pi } }}\exp \left( { - \frac{\gamma }{2}g(\mathbf{h},\mathbf{z})} \right)  \label{eq:ieq_g}
\end{align}
where
\begin{align}
    g(\mathbf{h},\mathbf{z}) \buildrel \Delta \over = ||\mathbf{y} - \mathbf{\Phi z}||_2^2 + 2{(\mathbf{h} - \mathbf{z})^\text{T}}{\mathbf \Phi^\text{T}}(\mathbf{\Phi z} - \mathbf{y}) + \frac{V}{2}||\mathbf{h} - \mathbf{z}||_2^2
\end{align}
and \(\mathbf{z}\) is the estimated vector.



\subsection{Expectation-Maximization Algorithm}
According to \cite{IFSBL2017LSP}, ignoring those terms that are independent of \(\mathbf{h}\), the approximate posterior distribution \({q_h}(\mathbf{h})\) can be calculated by
\begin{align}
    \ln {q_h}(\mathbf{h}) 
&\propto {} - {\mathbf{h}^\text{T}}\left( {\frac{{V\left\langle \gamma  \right\rangle }}{2}\mathbf{I} + \mathbf{\Lambda} } \right)\mathbf{\mathbf{h}} \notag
\\
&\phantom{\propto} +\left\langle \gamma  \right\rangle {\mathbf{h}^\text{T}}(2{\mathbf \Phi^\text{T}}(\mathbf{\Phi z} - \mathbf{y}) - V\mathbf{z}). \label{eq:q_x}
\end{align}

Clearly, \({q_h}(\mathbf{h})\) follows a Gaussian distribution with its mean and covariance matrix, respectively, given as
\begin{align}
\boldsymbol{\mu} & = -\left\langle \gamma  \right\rangle \mathbf{\Sigma} \left( {{\mathbf \Phi^\text{T}}\mathbf{\Phi z} - {\mathbf \Phi^\text{T}}\mathbf{y} - \frac{V}{2}\mathbf{z}} \right),
\label{eq:miu}
   \\ \mathbf{\Sigma}  &= {\left( {\frac{{V\left\langle \gamma  \right\rangle }}{2}\mathbf{I} + \mathbf{\mathbf{\Lambda}} } \right)^{ - 1}}
\label{eq:sigma}
\end{align}

In a similar manner, \(\boldsymbol\alpha \) can be expressed as a product of Gamma distributions and the parameters \({\tilde a}\) and \({{\tilde b}_n}\) are given by
\begin{align}
\tilde a = a + \frac{1}{2}{\kern 1pt} ,
 ~~
{{\tilde b}_n} = b + \frac{1}{2}\left\langle {h_n^2} \right\rangle. \label{eq:a_b_} 
\end{align}

Likewise, \(\gamma \) follows a Gamma distribution with the parameters \({\tilde c}\) and \({\tilde d}\) given as
\begin{align}
    \tilde c = c + \frac{R}{2} 
    ,~\tilde d = d + \frac{1}{2}\left\langle {g(\mathbf{h},\mathbf{z})} \right\rangle \label{eq:c_d_}.
\end{align}

In summary, the E-step involves update of the posterior approximations for hidden variables \(\mathbf{h}\), \(\boldsymbol\alpha \), and \(\gamma \). Some of the expectations and moments used during the update are summarized as
\begin{align}
    \left\langle {{\alpha _n}} \right\rangle  = \frac{{\tilde a}}{{{{\tilde b}_n}}}, 
    ~~ \left\langle \gamma  \right\rangle  = \frac{{\tilde c}}{{\tilde d}},
    ~~ \left\langle {h_n^2} \right\rangle  = \mu_n^2 + {\Sigma _{n,n}},
    \label{eq:expectation_updata}
\end{align}
\begin{align}
\left\langle {g(\mathbf{h},\mathbf{z})} \right\rangle  =& ||\mathbf{y} - \mathbf{\Phi z}||_2^2 + 2{(\boldsymbol{\mu}  - \mathbf{z})^\text{T}}{\mathbf \Phi^\text{T}}(\mathbf{\Phi z} - \mathbf{y})  
\notag
\\
&+ \frac{V}{2}(||\boldsymbol{\mu}  - \mathbf{z}||_2^2+ \text{Tr}(\mathbf{\Sigma} ))
\label{eq:g_update}
\end{align}
where \({\Sigma _{n,n}}\) represents the \(n\)-th diagonal element of \(\mathbf{\Sigma}\).

In M-step, we can use \(\mathbf{z} = \boldsymbol{\mu}\) to update the estimated vector to maximum the expectation according to \cite{IFSBL2017LSP}. Given the current estimate of \(\mathbf{z}\), update the posterior approximations \({q_h}(\mathbf{h}),{q_\alpha }(\boldsymbol\alpha ),{q_\gamma }(\gamma )\) according to \eqref{eq:miu}-\eqref{eq:sigma},\eqref{eq:a_b_},\eqref{eq:c_d_}.


\subsection{Embedded Bayesian Threshold }
The traditional Bayesian threshold mechanism has been successfully implemented in \cite{BHT_Zayyani2009ISP} and \cite{Kang2015TSP}. 
SBL has been extended to more practical scenarios in \cite{BHT2022TSP}. However, these algorithms focus mainly on the selection of sparse values and rely on fixed threshold settings. This may result in important sparse values being masked by noise and thus not effectively identified, particularly in low SNR. Therefore, it is imperative to develop a method capable of adaptively adjusting thresholds during the signal estimation process. 


Let \(H_1\) denote the hypothesis of nonzero elements when the SBL algorithm converges, while let \(H_0\) represent the hypothesis of zero elements. As depicted in \eqref{eq:test11}, when the probability of \(H_1\) is greater than \(H_0\), we consider this location to be a nonzero element. Otherwise, we consider this location as a zero element. 
\begin{equation}
H=
\left\{
\begin{array}{*{20}{c}}
H_1, & {P(H_1|\alpha_n)}>{P(H_0|\alpha_n)} , \\
H_0, & {P(H_1|\alpha_n)}<{P(H_0|\alpha_n)} .
\end{array}
\right.
\label{eq:test11}
\end{equation}

\begin{lemma}
    The Bayesian threshold for variance under the hypothesis when the SBL algorithm converges, can be written as the following iterative expression
\begin{align}
\rho_n^{(t + 1)} = \rho_n^{(t)} + \ln \frac{P(h_n^{(t)}|\alpha_n^{(t)},H_1)}{P(h_n^{(t)}|\alpha_n^{(t)},H_0)}.
\label{iter_BTH}
\end{align}
where \(\rho_n^{(t)} = \ln\frac{{P({H_1}|{\alpha _n^{(t)}})}}{{P({H_0}|{\alpha _n^{(t)}})}}\) is the result of the \(t\)-th iteration.
\end{lemma}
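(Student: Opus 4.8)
The plan is to read \eqref{iter_BTH} as a sequential Bayes / log-likelihood-ratio recursion, treating the successive IFSBL estimates $h_n^{(0)}, h_n^{(1)}, \dots$ as incoming ``measurements'' that progressively sharpen the belief about whether index $n$ is active. The first observation to set up is that, inside the iteration, the updated hyperparameter $\alpha_n^{(t+1)}$ is a deterministic function of the previous hyperparameter $\alpha_n^{(t)}$ and the freshly produced estimate $h_n^{(t)}$ (through $\tilde b_n = b + \tfrac{1}{2}\langle h_n^2\rangle$ in \eqref{eq:a_b_} together with $\langle\alpha_n\rangle=\tilde a/\tilde b_n$ in \eqref{eq:expectation_updata}). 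Consequently, conditioning on $\alpha_n^{(t+1)}$ carries exactly the information contained in the pair $(\alpha_n^{(t)},h_n^{(t)})$, so that $P(H_i\mid\alpha_n^{(t+1)}) = P(H_i\mid h_n^{(t)},\alpha_n^{(t)})$ for $i\in\{0,1\}$.

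Given that identification, the next step is a routine application of Bayes' theorem to each hypothesis conditioned on the new sample,
\begin{equation}
P(H_i\mid h_n^{(t)},\alpha_n^{(t)}) = \frac{P(h_n^{(t)}\mid \alpha_n^{(t)},H_i)\,P(H_i\mid\alpha_n^{(t)})}{P(h_n^{(t)}\mid\alpha_n^{(t)})},\qquad i\in\{0,1\}.
\end{equation}
Forming the ratio of the $i=1$ and $i=0$ versions, the hypothesis-free evidence factor $P(h_n^{(t)}\mid\alpha_n^{(t)})$ cancels, leaving
\begin{equation}
\frac{P(H_1\mid\alpha_n^{(t+1)})}{P(H_0\mid\alpha_n^{(t+1)})} = \frac{P(h_n^{(t)}\mid\alpha_n^{(t)},H_1)}{P(h_n^{(t)}\mid\alpha_n^{(t)},H_0)}\cdot\frac{P(H_1\mid\alpha_n^{(t)})}{P(H_0\mid\alpha_n^{(t)})}.
\end{equation}
Taking natural logarithms of both sides and substituting the definition $\rho_n^{(t)}=\ln\frac{P(H_1\mid\alpha_n^{(t)})}{P(H_0\mid\alpha_n^{(t)})}$ yields \eqref{iter_BTH} at once. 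I would also note in passing that the decision rule \eqref{eq:test11} is then exactly $\rho_n^{(t)}\gtrless 0$, which is what makes $\rho_n^{(t)}$ a natural adaptive threshold to track, and that the recursion is seeded from the (equal) prior odds of $H_0$ and $H_1$.

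The main obstacle is not the algebra but justifying the structural assumption in the first paragraph: that the ``information state'' at step $t$ is summarized by $\alpha_n^{(t)}$ alone (a Markov property of the IFSBL updates) and that $h_n^{(t)}$ contributes a single fresh likelihood factor $P(h_n^{(t)}\mid\alpha_n^{(t)},H_i)$ and nothing else. I would argue this from the two-layer hierarchical model: conditioning on $H_i$ fixes the prior family for $\alpha_n$, conditioning on $\alpha_n^{(t)}$ fixes the Gaussian $q_h$ from which $h_n^{(t)}$ is effectively drawn (with variance governed by $\alpha_n^{(t)}$), and the mean-field factorization $q(\boldsymbol\theta)=q_h(\mathbf h)q_\alpha(\boldsymbol\alpha)q_\gamma(\gamma)$ makes the per-coordinate update self-contained. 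Care is needed to state precisely in what sense $h_n^{(t)}$ is ``conditionally independent of the past given $\alpha_n^{(t)}$ and $H_i$''; once that is granted, the cancellation of $P(h_n^{(t)}\mid\alpha_n^{(t)})$ and the logarithmic accumulation of the recursion follow immediately.
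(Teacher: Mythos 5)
Your derivation reaches the recursion \eqref{iter_BTH}, but by a genuinely different route from the paper. The paper works with $\alpha_n$ as the datum: it expands $P(H_1\mid\alpha_n)$ by marginalizing over $h_n$ as in \eqref{eq:test13}, invokes the convergence hypothesis to collapse $P(h\mid H_0)$ to $\delta(h)$ and $P(h\mid H_1)$ to $\delta(h-h_n)$ --- this is precisely where the phrase ``when the SBL algorithm converges'' does its work --- obtains the single-step odds identity \eqref{eq:test16}, and only then turns it into \eqref{eq:test17} by the convention that the posterior of one iteration serves as the prior of the next. You instead treat the iterate $h_n^{(t)}$ as a fresh observation, apply Bayes' rule with everything conditioned on $\alpha_n^{(t)}$, and let the evidence factor cancel in the odds ratio; the convergence hypothesis never enters, its role being replaced by the identification $P(H_i\mid\alpha_n^{(t+1)})=P(H_i\mid h_n^{(t)},\alpha_n^{(t)})$. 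Your sequential-Bayes framing has the advantage of making transparent why the decision rule \eqref{eq:test11} is simply $\rho_n^{(t)}\gtrless 0$, why the recursion is seeded with equal prior odds, and why the likelihood ratio later specializes to \eqref{EBTH-IFSBL-T}; the paper's route makes explicit how the convergence assumption licenses replacing the marginalization over $h_n$ by a point evaluation at the current estimate.

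One step of yours is not justified as stated: from ``$\alpha_n^{(t+1)}$ is a deterministic function of $(\alpha_n^{(t)},h_n^{(t)})$'' you conclude that conditioning on $\alpha_n^{(t+1)}$ carries \emph{exactly} the information of the pair. Determinism gives only ``at most'': a many-to-one map $(\alpha,h)\mapsto\alpha'$ loses information, so $P(H_i\mid\alpha_n^{(t+1)})$ is in general a coarsening of $P(H_i\mid h_n^{(t)},\alpha_n^{(t)})$, and moreover the update through \eqref{eq:a_b_} and \eqref{eq:expectation_updata} also involves $\langle\gamma\rangle$ and $\Sigma_{n,n}$, so the pair is not even the full argument of the map. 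The identification therefore has to be adopted as an explicit modeling assumption (that the discarded information is irrelevant to $H_i$), which puts it on a heuristic footing comparable to --- but not derivable from --- the paper's own ``previous posterior becomes new prior'' step and delta-collapse; you flag this obstacle yourself, and the sketched appeal to the mean-field factorization does not close it.
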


\begin{proof}
    Using the Bayesian formula, we can derive the expression for \({P({H_1}|{\alpha _n})}\) in  \eqref{eq:test11}
\begin{align}
P(H_1|\alpha_n) &= \frac{P(\alpha_n|H_1) P(H_1)}{P(\alpha_n)} \nonumber \\
&= \frac{P(H_1)}{P(\alpha_n)} \int P(\alpha_n|H_1,h_n)  P(h_n|H_1) dh_n\nonumber \\&
=P(H_1)  \int \frac{P(h_n|\alpha_n,H_1)}{P(h_n)} P(h_n|H_1)dh_n.\label{eq:test13}
\end{align}

Notably, there are certain differences between our stated assumptions and the general assumptions made in \cite{BHT2022TSP}. The difference is that our assumption is to judge zero and nonzero elements assuming that the current algorithm has converged while the general assumptions is to judge zero and nonzero elements after algorithm already converged. Since the assumption is made under algorithm convergence, the zero elements at this time will converge to \(\delta(h)\), that is, if and only if \(h=0\), the probability \(P({h}|{H_0})\) is 1. For non-zero elements, \(P({h}|{H_1})\) will converge to \(\delta(h-h_n)\). Although \(h_n\) is unknown, \(\delta(h-h_n)\) means that \(P({h}|{H_1})\) must converge with probability 1. Since \(h_n\) is estimated under the condition of convergence of the algorithm, but each round may be the final convergence result, it is reasonable to take each estimated result as the final convergence estimate result. Therefore, when incorporated into \eqref{eq:test13}, we can obtain
\begin{equation}
P(H_1|\alpha_n) = P(H_1)  \frac{P(h_n|\alpha_n,H_1)}{P(h_n)}.
\label{eq:test14}
\end{equation}

Similarly to \eqref{eq:test14}, \({P({H_0}|{\alpha _n})}\) can be transformed into the following expression through deduction
\begin{equation}
P(H_0|\alpha_n) = P(H_0)  \frac{P(h_n|\alpha_n,H_0)}{P(h_n)}.
\label{eq:test15}
\end{equation}

By dividing \eqref{eq:test14} and \eqref{eq:test15} and taking the logarithm, the following expression can be derived:
\begin{equation}
\ln\frac{P(H_1|\alpha_n)}{P(H_0|\alpha_n)} = \ln\frac{P(H_1)}{P(H_0)} + \ln\frac{P(h_n|\alpha_n,H_1)}{P(h_n|\alpha_n,H_0)}.
\label{eq:test16}
\end{equation}

In \eqref{eq:test16}, the left-hand side of the equation represents the posterior, the first term on the right-hand side represents the prior and the second term on the right-hand side represents the likelihood. We refrain from imposing specific constraints on the threshold to allow adaptive adjustment. 
Furthermore, we equate the posterior of the current iteration with the prior of the previous iteration. This iterative approach involves progressively approximating the posterior based on the previous posterior, serving as the updated prior. Consequently, this can be expressed as an iterative formulation
\begin{align}
\ln\frac{P^{(t + 1)}(H_1|\alpha_n^{( t)})}{P^{(t + 1)}(H_0|\alpha_n^{( t)})} =
\ln\frac{P^{( t)}(H_1|\alpha_n^{( t)})}{P^{\left( t \right)}(H_0|\alpha_n^{( t)})} + \ln\frac{P^{( t)}(h_n^{( t)}|\alpha_n^{( t)},H_1)}{P^{( t)}(h_n^{( t)}|\alpha_n^{( t)},H_0)}.
\label{eq:test17}
\end{align}

Let \(\rho_n^{(t)} = \ln\frac{{P({H_1}|{\alpha _n^{(t)}})}}{{P({H_0}|{\alpha _n^{(t)}})}}\), we can express \eqref{eq:test17} in the form of \eqref{iter_BTH}.


\end{proof}

In the actual iteration, since \(\boldsymbol{\mu} \), \(\boldsymbol\alpha\), \(\gamma\) are all known and dynamic in each iteration, using iterative likelihood estimation will also update the dynamic as a correction term. In \eqref{iter_BTH}, the final term is referred to as the likelihood function. We initialize the priors for \(H_0\) and \(H_1\) with equal probabilities at the start of iteration. During each iteration, we update the likelihood function and subsequently use the obtained posterior as the prior for the next iteration.

\subsection{Proposed Algorithm}
The main idea on denoising is to fix the noise variance selected by Bayesian threshold during the SBL operation. Therefore, we increase the variance of the noise by multiplying a constant \(\varsigma\) greater than 1, which will make the noise impossible to converge to a peak in \textbf{Algorithm} \ref{alg:IFSBL-T}. Since the prior on signals and noise follow a Gaussian distribution, which means \(h_n\) follows \(\frac{\sqrt{\alpha_n}}{\sqrt{2\pi}} \exp{(-\frac{\alpha_n h_n^2}{2} )}\) while \(w_n\) follows \(\frac{\sqrt{\gamma}}{\sqrt{2\pi}} \exp{(-\frac{\gamma h_n^2}{2} )}\). We can substitute the prior into the second term on the right side of equation \eqref{iter_BTH} to get the likelihood function of embedded Bayesian threshold for improving channel estimation of OTFS, which is 
\begin{align}
    \ln \frac{P(h_n|\alpha_n,H_1)}{P(h_n|\alpha_n,H_0)}
    =
    \frac{1}{2}\ln{\frac{\alpha_n}{\gamma}}
    -\frac{h_n^2}{2}(\alpha_n-\gamma)
    \label{EBTH-IFSBL-T}
\end{align}
where \(H_1\) is the non-zero element hypothesis and \(H_0\) is the zero element hypothesis.


Therefore, we can embed the proposed Bayesian threshold into the iterative operation of the IFSBL. The Bayesian threshold algorithm is detailed in \textbf{Algorithm \ref{alg:IFSBL-T}}.

\begin{algorithm}[H]
\caption{IFSBL with Embedded Bayesian Threshold}\label{alg:IFSBL-T}
\begin{algorithmic}[1]
\STATE {\textbf{Initialization:}}
\STATE \hspace{0.0cm}Set $\boldsymbol{\rho}^{(0
)} = \text{0}$, $\boldsymbol{\lambda} = \text{0}$, $a = b = c = d = 10^{-5}$, $iteration_{\max} = 10^3$,  $\epsilon = 10^{-8}$, $\varsigma=10$
\STATE {\textbf{While}}$(i<iteration_{\max} \ \textbf{and} \ \frac{\left \| \mathbf{z}^{(i)}-\mathbf{z}^{(i-1)} \right \|_2^2}{\left \| \mathbf{z}^{(i-1)} \right \|_2^2} \ge \epsilon ):$
\STATE \hspace{0.5cm}$  a,b,c,d\gets \eqref{eq:a_b_},\eqref{eq:c_d_} $
\STATE \hspace{0.5cm}$  \left\langle {{\alpha _n}} \right\rangle, \left\langle \gamma  \right\rangle,\left\langle {h_n^2} \right\rangle,\left\langle {g(\mathbf{h},\mathbf{z})} \right\rangle \gets \eqref{eq:expectation_updata},\eqref{eq:g_update}$
\STATE \hspace{0.5cm}$\boldsymbol{\rho}^{(i)} \gets \eqref{iter_BTH}$
\STATE \hspace{0.5cm}$ \{\xi\} \leftarrow \text{find position}  (\boldsymbol{\rho}^{(i)} <0)  $
\STATE \hspace{0.5cm}$\left \langle \alpha _{\xi } \right \rangle  \leftarrow \varsigma\cdot\left \langle \alpha _{\xi } \right \rangle  $
\STATE \hspace{0.5cm}$ \mathbf{\Sigma},\boldsymbol{\mu} \leftarrow \eqref{eq:sigma}, \eqref{eq:miu} $
\STATE \hspace{0.5cm}$\mathbf{z}^{(i)} \leftarrow \boldsymbol{\mu} $
\STATE \textbf{Return}\hspace{0.1cm}$\mathbf{\hat{h}}=\mathbf{z}^{(i)}$
\end{algorithmic}
\label{alg1}
\end{algorithm}

\section{Simulation Results} 
In this section, we compare the proposed algorithm with SBL\cite{tzikas2008MSP}, IFSBL\cite{IFSBL2017LSP}, and OMP\cite{OMP2007trans} algorithms. For simplicity, we refer to \textbf{Algorithm \ref{alg:IFSBL-T}} as IFSBL-T. We analyze the channel estimation performance of these algorithms using normalized mean square error (NMSE) and bit error rate (BER). The parameters of the OTFS system are set as follows: \(M\) = 128, \(N\) = 128, \(P\) = 9, carrier frequency = 4 GHz, \(\Delta f\) = 15 kHz, 4-QAM modulation, maximum delay tap \(l_\tau\) = 20 and Doppler tap \(k_\nu\) = 16, in which Doppler tap corresponds to user equipment speeds of 500 Km/h. We use the fast time-varying channel according to the extended vehicular A (EVA) channel model proposed by 3GPP.  NMSE is defined as \(10{\log _{10}}\left( {\frac{{\parallel \mathbf{h} - \mathbf{\hat h}\parallel _2^2}}{{\parallel \mathbf{h}\parallel _2^2}}} \right)\), where $\mathbf{h}$ is the true value of the channel state information (CSI) and $\mathbf{\hat h}$ is the approximate value.

\begin{figure}[!h]
\centering
\includegraphics[width=2.5in]{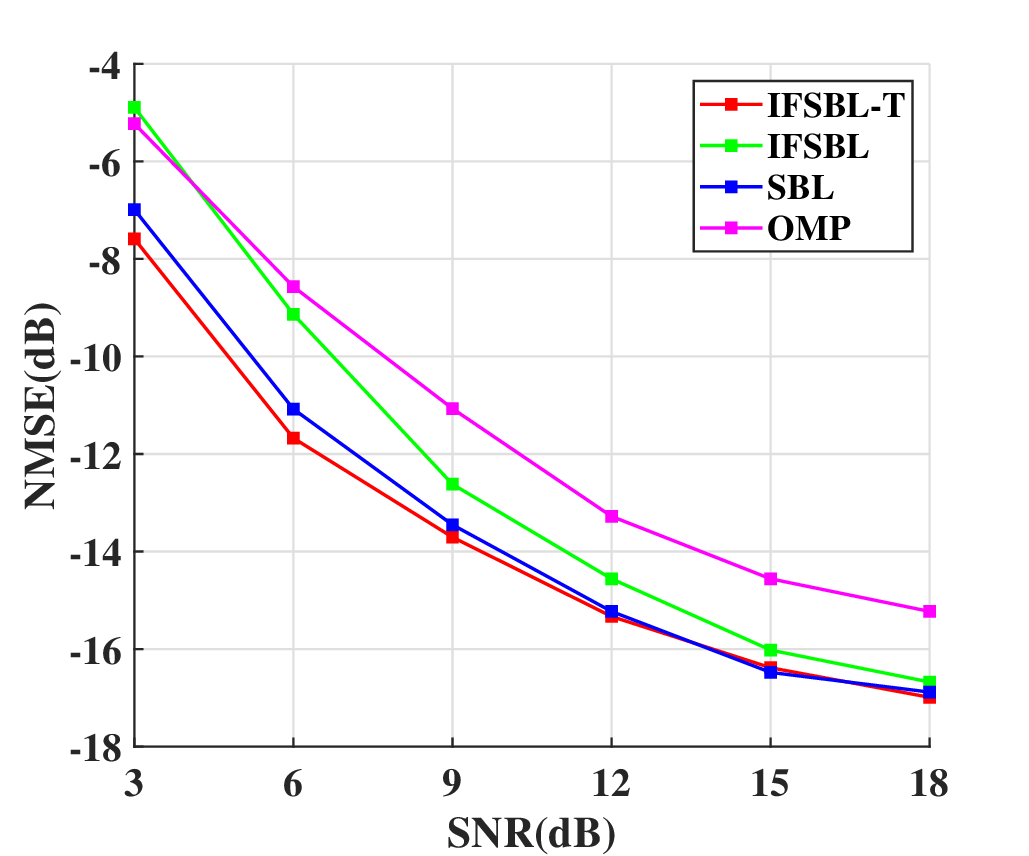}
\centering
\captionsetup{font={footnotesize }}
\caption{NMSE of OTFS channel estimation under different algorithms.}
\label{fig_1}
\end{figure}

We conducted simulations to assess the performance of various channel recovery algorithms in a SNR range of 3 to 18 dB. The results of the estimation of the OTFS channel are depicted in Fig. \ref{fig_1}. At high SNR values, the IFSBL-T algorithm and the SBL algorithm exhibit comparable performance. However, at low SNR values, the IFSBL-T algorithm outperforms both the IFSBL and SBL algorithms. Specifically, the IFSBL-T algorithm improves performance by approximately 2.6 dB compared to the IFSBL algorithm when SNR = 3 dB. This improved performance is attributed to the IFSBL-T algorithm's threshold-based selection of noise information, which effectively suppresses noise convergence into pseudo peaks through active assignment.

\begin{figure}[!h]
\centering
\includegraphics[width=2.5in]{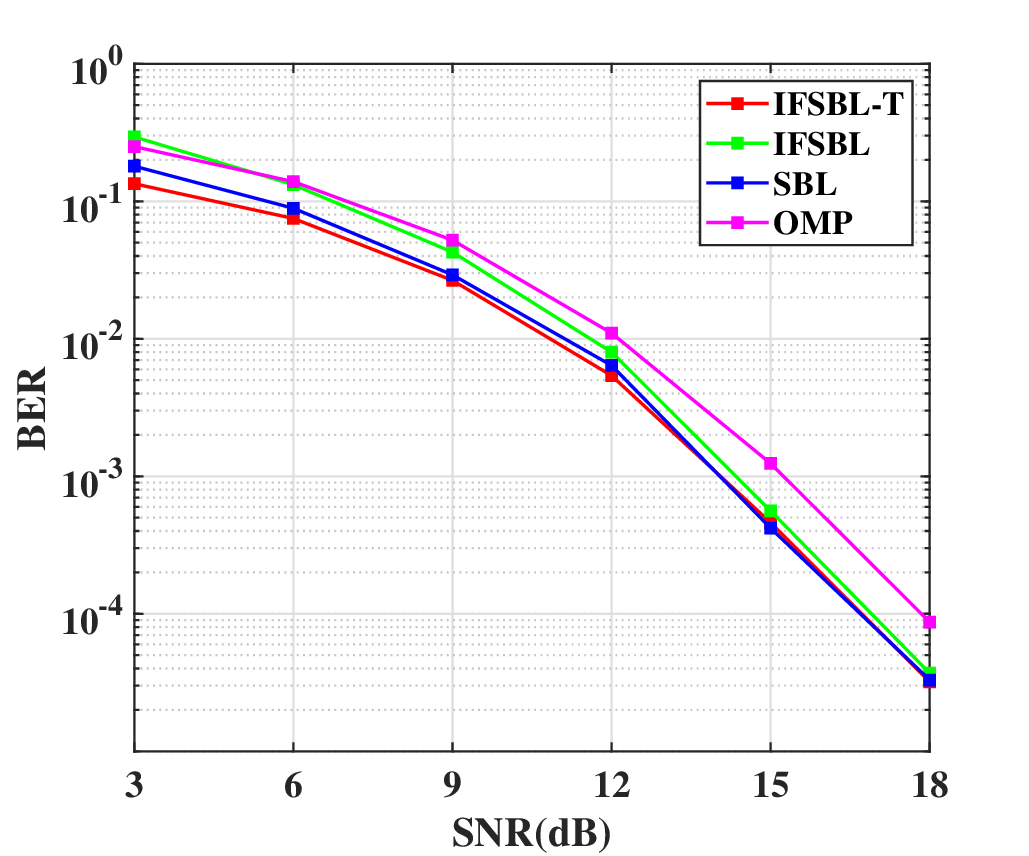}
\centering
\captionsetup{font={footnotesize }}
\caption{BER of OTFS under different algorithms.}
\label{fig_2}
\end{figure}

After acquiring the CSI, we employed the Message Passing algorithm for signal detection, with the simulation results depicted in Fig. \ref{fig_2}. It can be seen that the performance of IFSBL-T algorithm is generally better than that of traditional algorithms because IFSBL-T strictly inhibits the generation of pseudo peaks in places with non-sparse values. Since SBL assumes signals follow a Gaussian distribution, increasing the variance to suppress pseudo peaks can reduce noise in sparse signals to some extent, thereby improving BER performance. Fig. \ref{fig_1} and Fig. \ref{fig_2} demonstrate that the SBL performs similarly to the IFSBL-T  algorithm under high SNR conditions. However, under low SNR conditions, the performance of SBL is consistently inferior to that of the IFSBL-T algorithm due to the impact of noise.


\begin{figure}[!t]
\centering
\includegraphics[width=2.5in]{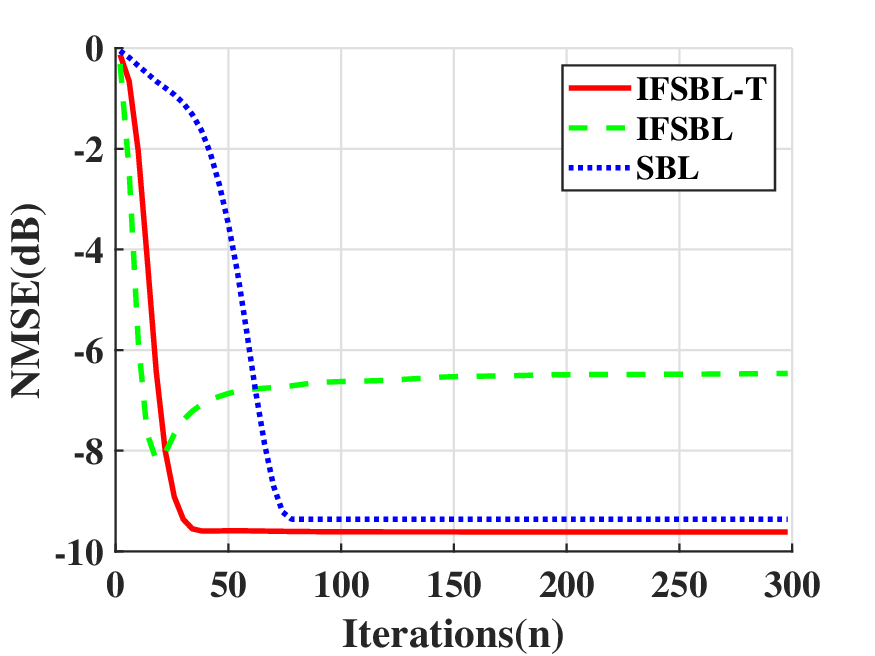}
\centering
\captionsetup{font={footnotesize }}
\caption{Comparison of convergence speed and performance among different algorithms.}
\label{fig_3}
\end{figure}
Furthermore, to more intuitively compare the performance differences in complexity, convergence speed, and other aspects among SBL, IFSBL, and IFSBL-T, we conducted two additional simulations. We set the SNR to 10 dB, the channel length to be reconstruct at 240, the received signal length at 180, and the number of propagation paths at 12. Using different SBL algorithms for 100 independent simulations, the results shown in Fig. \ref{fig_3} reveal significant performance differences. The IFSBL method requires over 100 iterations to achieve convergence. In contrast, SBL converges after approximately 80 iterations, while IFSBL-T achieves convergence in just 40 iterations. Furthermore, the IFSBL-T algorithm demonstrates a significantly lower NMSE compared to the IFSBL algorithm after reaching convergence.

\begin{figure}[!t]
\centering
\includegraphics[width=2.5in]{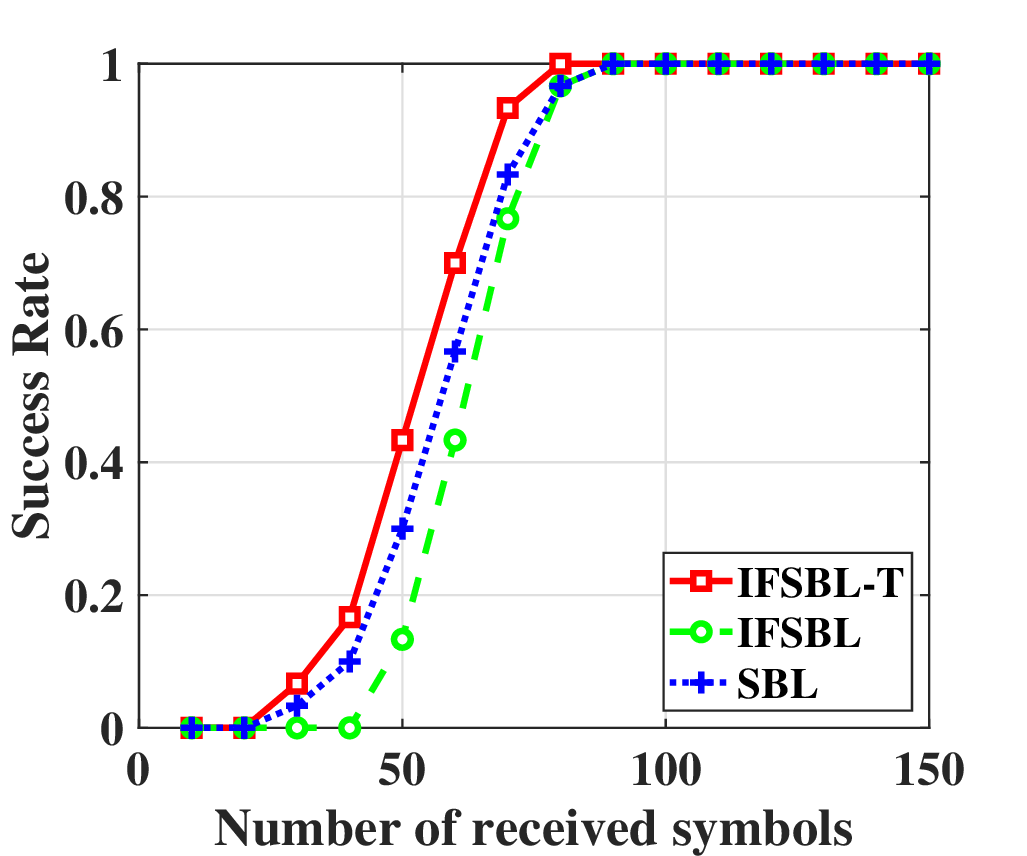}
\centering
\captionsetup{font={footnotesize }}
\caption{Success rates of respective algorithms versus Number of Measurement.}
\label{fig_4}
\end{figure}

Additionally, we conducted simulations comparing recovery success rates \cite{successratio2016AOS}, with results shown in Fig. \ref{fig_4}. This simulation demonstrates that the proposed algorithm can recover the original information with the fewest received symbols.

Regarding the computational complexity of the proposed algorithm, it is worth noting that the embedding of the Bayesian threshold within the iterative process necessitates a calculation for each iteration. These calculations are limited to element-level operations, rather than involving matrix computations. Consequently, the computational complexity of the overall algorithm remains consistent with that of IFSBL. In terms of space complexity, it is essential to consider the additional computing space required to store the Bayesian threshold obtained from the previous iteration. In particular, this additional space is equivalent to the length of the signal being processed. 

\section{Conclusion}
In this paper, we propose a channel estimation algorithm suitable for the OTFS system. To eliminate pseudo peaks in the IFSBL process under low SNR conditions, we develop an active denoising mechanism based on adaptive Bayesian thresholding. This approach effectively reduces noise, enhances the accuracy of channel estimation, and retains low algorithmic complexity. 
Simulation analysis at low SNR demonstrates that the proposed IFSBL-T algorithm reduces complexity and achieves a 0.6 dB improvement in channel estimation performance compared to the SBL algorithm. Furthermore, it shows a 2.6 dB improvement in channel estimation performance over the IFSBL algorithm while maintaining the same complexity.

\thispagestyle{empty}
\section*{Acknowledgment}
This work was supported by China National Key R\&D Programmes under Grant 2021YFB2800801, National Natural Science Foundation of China under Grant 62001174, 62271422, U23A20376, Sichuan Outstanding Youth Science and Technology Talents Project under Grant 2022JDJQ0047, Sichuan Science and Technology Program 2022ZYD0119.

\bibliographystyle{IEEEtran}
\bibliography{refer}

\end{document}